\newcommand{\Sp}[1]{\text{sp}\left(#1\right)}
\newcommand{\dist}[3]{\text{dist}_{#3}\left(#1\to #2\right)}
\newcommand{\s}{\mathcal{S}}
\newcommand{\A}{\mathcal{A}}
\newcommand{\Trace}{\text{Trace}}
\newtheorem{remark}{Remark}
\title{\LARGE \bf
Localized LQR Optimal Control
}
\author{Yuh-Shyang Wang, Nikolai Matni, and John C. Doyle
\thanks{The authors are with the department of Control and Dynamical Systems, California Institute of Technology, Pasadena, CA 91125, USA ({\tt\small \{yswang,nmatni,doyle\}@caltech.edu}).}%
\thanks{This research was in part supported by NSF, AFOSR, ARPA-E, and the Institute for Collaborative Biotechnologies through grant W911NF-09-0001 from the U.S. Army Research Office. The content does not necessarily reflect the position or the policy of the Government, and no official endorsement should be inferred.}
}
\begin{document}
\maketitle
\thispagestyle{empty}
\pagestyle{empty}

\begin{abstract}

This paper introduces a receding horizon like control scheme for localizable distributed systems, in which the effect of each local disturbance is limited spatially and temporally. 
We characterize such systems by a set of linear equality constraints, and show that the resulting feasibility test can be solved in a localized and distributed way. We also show that the solution of the local feasibility tests can be used to synthesize a receding horizon like controller that achieves the desired closed loop response in a localized manner as well. 
Finally, we formulate the Localized LQR (LLQR) optimal control problem and derive an analytic solution for the optimal controller. 
Through a numerical example, we show that the LLQR optimal controller, with its constraints on locality, settling time, and communication delay, can achieve similar performance as an unconstrained $\mathcal{H}_2$ optimal controller, but can be designed and implemented in a localized and distributed way.

\end{abstract}

\section{Introduction} \label{sec:intro}



Large scale systems pose many challenges to the control system designer: simple local controllers generated by heuristics cannot guarantee global performance (or at times even stability), whereas traditional centralized methods are neither scalable to compute, nor physically implementable. Specifically, a centralized controller requires computing with the \emph{global} plant model for its synthesis (such a computation can quickly become intractable for large systems), and further necessitates that all measurements/control actions be collected/distributed \emph{instantaneously}. In order to address the issues arising from communication constraints amongst sensors, actuators and controllers, the field of distributed optimal control has emerged.  In particular, communication constraints often manifest themselves as subspace constraints in the optimal control problem, restricting the controller to respect specific delay and sparsity patterns.

It was soon realized, however, that the tractability of the distributed optimal control problem depends on how quickly information is shared between controllers, relative to the propagation of their control action through the plant. 
In particular when quadratic invariance (QI) \cite{2006_Rotkowitz_QI_TAC, 2010_Rotkowitz_QI_delay} holds, the distributed optimal control problem can be formulated in a convex manner -- specifically, constraints in the Youla domain are then equivalent to constraints in the original ``$K$'' domain, allowing the optimization to be written as a constrained model matching problem.  With the identification of quadratic invariance as an appropriate means for convexifying the distributed optimal control problem, much progress has been made in finding finite dimensional reformulations of the optimization.

In the case of constraints induced by strongly connected communication graphs, computing the optimal controller has been reduced to solving a finite dimensional convex program in \cite{2013_Lamperski_H2, 2014_Lamperski_H2_journal} and \cite{2014_Matni_Hinf}. In the case of sparsity constrained controllers, specific structures have been explicitly solved: including two-player \cite{2012_Lessard_two_player, 2014_Lessard_Hinf}, triangular  \cite{2013_Scherer_Hinf, 2014_Tanaka_Triangular}, and poset-causal  \cite{2010_Shah_H2_poset, 2011_Shah_poset_causal} systems.  It should also be noted that similar approaches to convex distributed optimal controller synthesis have been applied to spatially invariant systems satisfying a funnel causality property \cite{2002_Bamieh_spatially_invariant, 2005_Bamieh_spatially_invariant}.


Although the aforementioned results provide insight into the structure of the optimal controller, the applicability is highly limited to specific plant structures of \emph{moderate size}. In particular, the distributed optimal controller is often as expensive to compute as its centralized counterpart, and even more difficult to implement.  For example, if a plant has a strongly connected topology, (e.g. a chain), the QI condition requires that each controller share its measurements with \emph{the whole network} -- this becomes a limiting factor as systems scale to larger and larger size.   Although several attempts have been made in the literature to find sparse controllers (which lead to tractable implementations), such as sparsity-promoting control \cite{2011_Fardad_sparsity_promoting, 2013_Lin_sparsity_promoting}, and spatial truncation \cite{2009_Motee_spatial_truncation,2014_Motee_spatially_decaying},  the synthesis procedure is still centralized.  In this paper and our previous work \cite{2014_Wang_ACC}, we advocate the importance of locality (i.e. sparsity in the controller \emph{and} the closed-loop response) for the joint scalability of both computation and implementation.  


The key observation in \cite{2014_Wang_ACC} was that a localized (and thus sparse and scalable) controller implementation does not mean that the information structure on controller $K$ (the transfer function from measurements to control actions) needs to be sparse. By allowing controllers to exchange both measurements \emph{and} control actions, we showed that the controller implementation is localized if the system (with certain technical assumptions) is \emph{state feedback localizable}.  We characterize such systems in terms of the feasibility of a set of linear equations. These equations can be verified in a localized and distributed manner -- further, the solution to these equations can be used to synthesize the controller achieving the desired closed loop response in a completely localized way. This offers a scalable design and implementation method that additionally achieves a localized closed loop performance.


This paper is an extension of the work in \cite{2014_Wang_ACC}. The primary contributions are that we (1) generalize the notion of state feedback localizability to non-scalar sub-systems, (2) propose a receding-horizon like control scheme that is numerically robust to computation errors, and (3) formulate the Localized LQR (LLQR) optimal control problem, and derive its analytic solution. In particular, we emphasize the fact that the LLQR optimal controller can be synthesized and implemented in a localized way.

The paper is structured as follows. Section \ref{sec:st_cone} introduces the system model with spatio-temporal constraints and extends the notion of state feedback localizability to a general linear time invariant system. In Section \ref{sec:feasibility}, we characterize localizable distributed systems by the feasibility of a set of linear equations, and propose a receding-horizon like controller implementation based on the solution of the local feasibility test. The robustness of the implementation is discussed in the same section. In Section \ref{sec:lqr}, we formulate the LLQR optimal control problem and derive its analytic solution for both impulse and additive-white Gaussian noise (AWGN) disturbances. Section \ref{sec:performance} compares the performance of our method to several different control schemes from centralized and distributed optimal control. Finally, Section \ref{sec:conclusion} ends with conclusions and offers some possible future research directions.

\section{Preliminaries}\label{sec:st_cone}
\subsection{System Model} \label{sec:system}
Consider a discrete time distributed system $(A,B)$ with dynamics given by
\begin{equation}
x[k+1] = A x[k] + B u[k] + w[k] \label{eq:dynamics}
\end{equation}
where $x=(x_i)$, $u=(u_i)$ and $w=(w_i)$ are stacked vectors of local state, control, and disturbances, respectively. The objective is to design a distributed dynamic state-feedback controller such that the transfer functions from $w$ to $x$ and $w$ to $u$ satisfy some spatio-temporal (locality) constraints. Throughout this paper, we use $R$ to denote the closed loop transfer function from $w$ to $x$ and $M$ the closed loop transfer function from $w$ to $u$. The key idea in this paper is to force $R$ and $M$ to satisfy some spatio-temporal constraints, then show that the controller achieving the desired closed loop response can be synthesized and implemented by $R$ and $M$. For any transfer function $L$, we use $L[k]$ to denote the $k$-th spectral component of $L$, i.e. $L = \sum_{k=0}^{\infty} \frac{1}{z^k} L[k]$. Thus in the time domain, the ordered set $\{L[k]\}_{k=0}^{\infty}$ is the impulse response of the system.

\subsection{Spatio-Temporal Constraints}
The spatio-temporal constraint on a transfer function can be specified by the sparsity pattern on each of its spectral component. Define the support operator $\Sp{\cdot}: R^{m \times n} \to \{0,1\}^{m \times n}$, where $\{\Sp{A}\}_{ij}=1$ iff $A_{ij}\neq0$, and $0$ otherwise. We define $\s_1\bigcup\s_2$ the entry-wise OR operation for two binary matrices $\s_1,\s_2 \in \{0,1\}^{m \times n}$. We say that $\s_1\subseteq \s_2$ if $\s_1\bigcup\s_2=\s_2$. The product $\s_1 = \s_2 \s_3$ with binary matrices of compatible dimension is defined by the rule
\begin{equation}
(\s_1)_{ij} = 1 \textit{ iff there exists a $k$ such that } (\s_2)_{ik} = 1 \textit{ and } (\s_3)_{kj} = 1. \nonumber
\end{equation}
For a square binary matrix $\mathcal{S}_0$, we define $\mathcal{S}_{0}^{i+1} := \mathcal{S}_{0}^i \mathcal{S}_{0}$ for all positive integer $i$, and $\mathcal{S}_{0}^0 = I$. If $\mathcal{S}_0$ is the support of the adjacency matrix of a graph, we define the distance from state $k$ to $j$ as
\begin{equation}
\displaystyle \dist{k}{j}{\mathcal{S}_0}: = \min\{ i \in \mathbb{N} \cup 0 \, | \, \left(\mathcal{S}_0^{i}\right)_{jk} \neq 0\}. \nonumber
\end{equation}


The sparsity pattern of a transfer function can be described by a set of binary matrices. Define constraint space $\mathcal{S}_x := \sum_{k=0}^{\infty} \frac{1}{z^k} \s_{x}[k]$ by an ordered set of binary matrices $\s_{x}[k]$. We say that a transfer function $R$ is in $\mathcal{S}_x$ if and only if $\Sp{R[k]} \subseteq \s_{x}[k]$ for all $k$, which is denoted by $R \in \mathcal{S}_x$. If $\s_{x}[k] = \mathbf{0}$ for all $k > T$, we say that $\mathcal{S}_x$ is finite in time $T$, and any transfer function in $\mathcal{S}_x$ has a finite impulse response (FIR). 

In order to impose locality constraints on the system, we introduce the notion of $(A,d)$ sparseness to measure how disturbances propagate through the plant.
\begin{definition}
We say that a real matrix $X$ is \textit{$(A,d)$ sparse} if
\begin{equation*}
\Sp{X} \subseteq \bigcup_{i=0}^{d} \Sp{A}^i.
\end{equation*}
A constraint space $\mathcal{S}_x$ is $(A,d)$ sparse if and only if $\s_{x}[k]$ is $(A,d)$ sparse for all $k$.\footnote{For a large $d$, it is possible that $\bigcup_{i=0}^{d} \Sp{A}^i = \mathbf{1}$ - this simply means that the ``localized" region is the entire system.}
\end{definition}

Then, we define the following two sets to characterize the localized region for each state.
\begin{eqnarray}
\mathcal{E}_{(j,d)} = \{s \, | \, \dist{s}{j}{\Sp{A}} \leq d \} \nonumber\\
\mathcal{F}_{(j,d)} = \{s \, | \, \dist{j}{s}{\Sp{A}} \leq d \} \nonumber
\end{eqnarray}
Specifically, $\mathcal{E}_{(j,d)}$ contains all states with distance less than or equal to $d$ to $j$-th state. Equivalently, it contains all the (possibly) nonzero elements in $j$-th row of $\bigcup_{i=0}^{d} \Sp{A}^i$. $\mathcal{F}_{(j,d)}$ contains all states with distance less than or equal to $d$ from $j$-th state, which is equal to the set of all (possibly) nonzero elements in $j$-th column of $\bigcup_{i=0}^{d} \Sp{A}^i$.

We can use $\mathcal{S}_x$ to impose both sparsity and delay constraints. However, as the required communication delay constraints will depend on the implementation of the controller, we will discuss how these are constructed after we introduce the controller implementation in the next section.

\subsection{State Feedback Localizability} \label{sec:fspace_time}

We now define localized FIR constraints and state feedback localizability for a system $(A,B)$. Throughout this paper, we use $\mathcal{S}_x$ to denote the constraint space for $R$ (the transfer function from $w$ to $x$) and $\mathcal{S}_u$ the constraint space for $M$ (the transfer function from $w$ to $u$).
\begin{definition}
The constraint space pair $(\mathcal{S}_x, \mathcal{S}_u)$ is a $(d,T)$ localized FIR constraint for system $(A,B)$ if and only if
\begin{enumerate}
\item $\mathcal{S}_x$ and $\mathcal{S}_u$ are finite in time $T$.
\item $\mathcal{S}_x$ is $(A,d)$ sparse.
\item $\Sp{B}$ $\mathcal{S}_u$ is $(A,d+1)$ sparse.
\end{enumerate}
\label{dfn:2}
\end{definition}

The third condition of Definition \ref{dfn:2} is key to extending our previous results \cite{2014_Wang_ACC} to non-scalar sub-systems.\footnote{In this paper, each sub-system can contain several states and control signals. The $x_i$ and $u_i$  refer to a single state and a single control signal respectively, which are scalars. However, a scalar $u_i$ might affect several states at the same time, and a state might be affected by several control signals as well.} Here, we not only want to localize the closed loop from $w$ to $x$, but also want to achieve this by only using controllers in a localized region. For the scalar sub-system plant model that was considered in \cite{2014_Wang_ACC}, the latter condition is automatically satisfied. For non-scalar sub-systems, we need to explicitly impose this constraint on the controller. State feedback localizability is then defined as follows.
\begin{definition}
Assume that $(\mathcal{S}_x, \mathcal{S}_u)$ is a $(d,T)$ localized FIR constraint for system $(A,B)$. The system $(A,B)$ is state feedback $(d,T)$-FIR localizable by $(\mathcal{S}_x, \mathcal{S}_u)$ if and only if there exists strictly proper transfer functions $R \in \mathcal{S}_x$ and $M \in \mathcal{S}_u$ such that
\begin{equation}
R[k+1] = A R[k] + B M[k] \label{eq:freq}
\end{equation}
with $R[1]=I$. \label{dfn:3}
\end{definition}

%
%
Equation \eqref{eq:freq} is a relation that $R$ and $M$ must satisfy when the system dynamics are given by \eqref{eq:dynamics}. To see this, apply an impulse disturbance at state $j$ in \eqref{eq:dynamics}, i.e. $w[k] = \delta[k] e_j$. The closed loop response of $x[k]$ and $u[k]$ in \eqref{eq:dynamics} is by definition the $j$-th column of $R[k]$ and $M[k]$ respectively. Therefore, a valid closed loop transfer function pair $(R, M)$ must satisfy \eqref{eq:freq}. In addition, $(\mathcal{S}_x, \mathcal{S}_u)$ being a $(d,T)$ localized FIR constraints implies that $R[k] = \mathbf{0}$ and $M[k] = \mathbf{0}$ for all $k\geq T+1$, meaning that $R$ and $M$ are both FIR as well.

Definition \ref{dfn:3} has an intuitive interpretation in the time domain. For a $(d,T)$-FIR localizable system, a local disturbance $w_j$ only affects $x_i$ with $i \in \mathcal{F}_{(j,d)}$, and the effect will be eliminated in time $T$. Furthermore, we only need to activate the local controllers $u_i$ for $\{\ell \, | \, B_{\ell i} \not = 0\} \subseteq \mathcal{F}_{(j,d+1)}$ in order to localize the affected region. As illustrated in Figure \ref{fig:lightcones}, the affected region of $w_j$ is confined to the right section in the space-time diagram, which we call the forward space-time region for $w_j$.  

\begin{figure}[ht!]
      \centering
      \includegraphics[width=0.85\textwidth]{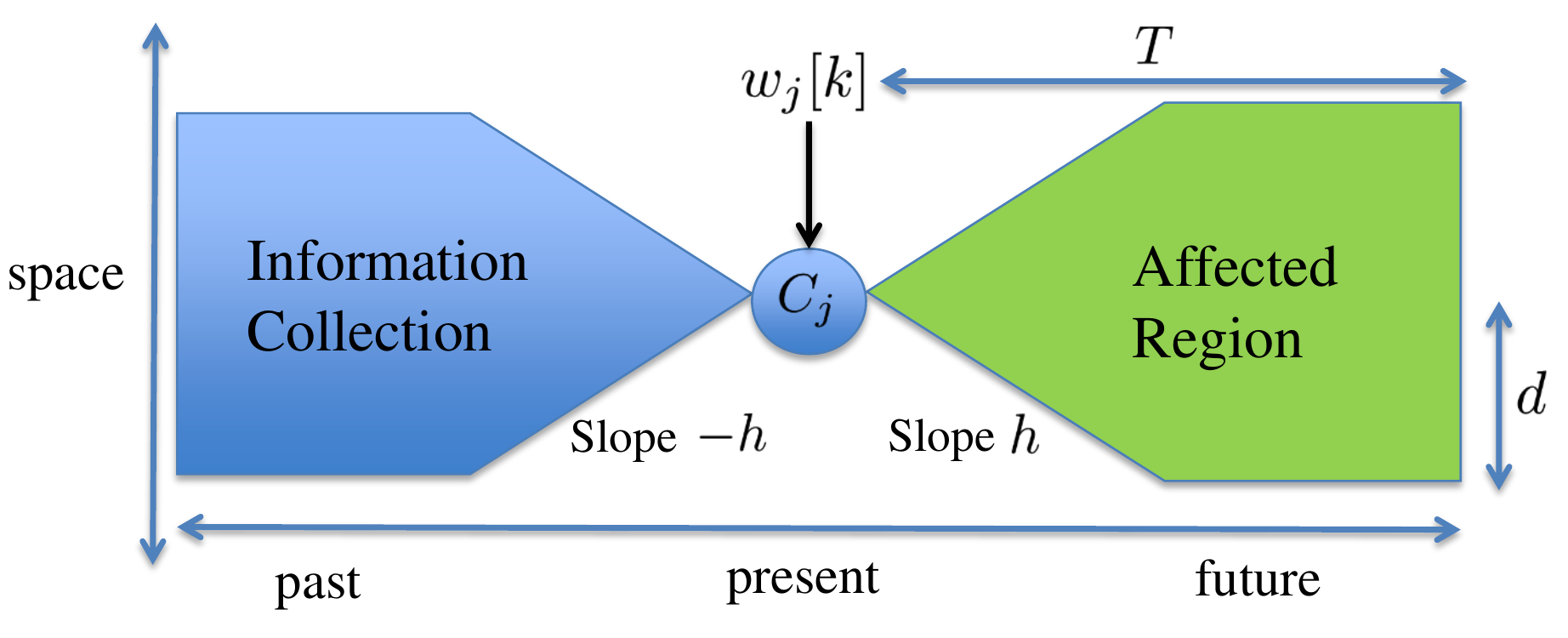}
      \caption{Local space-time region interpretation of proposed control strategy.}
      \label{fig:lightcones}
\end{figure}

%
%
%
%
Imposing space-time constraints on the closed loop has the added benefit that the controller can then be implemented in a localized manner. Assume that the disturbance $w$ can be perfectly estimated. Then, each local controller only needs to collect the estimated disturbances from a backward space-time region as shown in Figure \ref{fig:lightcones}. Hence we can impose communication delay constraints for information collection through this region by ensuring that  the backward space-time region is contained within an allowable region. 
We will discuss the relations between the closed loop (forward region) and controller implementation (backward region) in more detail in the next section.

\section{Feasibility Characterization and Controller Implementation} \label{sec:feasibility}
In this section, we give a feasibility test that characterizes localizable distributed systems, and that can be performed in a localized manner. We then show how a receding-horizon like controller can be synthesized based on the solution of the local feasibility test. Finally, we show that this implementation is robust with respect to computation error, and therefore especially useful for use with open-loop unstable systems.


\subsection{Local Feasibility Characterization}
Let $(\mathcal{S}_x, \mathcal{S}_u)$ be a $(d,T)$ localized FIR constraint for $(A,B)$. To check whether $(A,B)$ is $(d,T)$-FIR localizable by $(\mathcal{S}_x, \mathcal{S}_u)$, we can simply rewrite \eqref{eq:freq} and formulate the global feasibility test as
\begin{eqnarray}
&&\begin{bmatrix}I & -A & \cdots & 0\\
0 & \ddots & \ddots & \vdots\\
\vdots & \ddots & \ddots & -A \\
0 & \cdots & 0 & I
\end{bmatrix}
\begin{bmatrix}R[T+1] \\ \vdots \\ R[1]\end{bmatrix} = 
\begin{bmatrix}BM[T] \\ \vdots \\ BM[1] \\ I \end{bmatrix} \nonumber \\ \nonumber\\
&&R \in \mathcal{S}_x, M \in \mathcal{S}_u. \label{eq:system}
\end{eqnarray}

The feasibility of \eqref{eq:system} can be verified in a localized manner by selecting each column of $R$ and $M$. The fact that $(\mathcal{S}_x, \mathcal{S}_u)$ is a $(d,T)$ localized FIR constraint enables us to further reduce the dimension of each local feasibly test. 
In particular, let $x[0] = 0$, $u[0] = 0$, and apply an impulse disturbance $w[k] = \delta[k] e_j$ to \eqref{eq:dynamics}. As $(\mathcal{S}_x, \mathcal{S}_u)$ is a $(d,T)$ localized FIR constraint, we know that the nonzero states in $x[k+1]$, $A x[k]$, and $B u[k]$ are all contained in $\mathcal{F}_{(j,d+1)}$, motivating the definition of $(j,d)$-reduced state and control vectors \cite{2014_Wang_ACC}.
\begin{definition}
The $(j,d)$-reduced state vector of $x$ consists of all local state $x_i$ with $i \in \mathcal{F}_{(j,d+1)}$ and is denoted by $x_{(j,d)}$. Similarly, the $(j,d)$-reduced control vector of $u$ consists of all local control $u_i$ with $\{\ell \, | \, B_{\ell i} \not = 0\} \subseteq \mathcal{F}_{(j,d+1)}$ and is denoted by $u_{(j,d)}$.
\end{definition}

We can then define the $(j,d)$-reduced plant model $(A_{(j,d)},B_{(j,d)})$ by selecting submatrices of $(A,B)$ consisting of the columns and rows associated with $x_{(j,d)}$ and $u_{(j,d)}$. The $(j,d)$-reduced constraint space $(\mathcal{S}_{x(j,d)}, \mathcal{S}_{u(j,d)})$ can be defined in a similar way. In addition, we denote by $w(j,d)$ the new location of the source of disturbance $j$ within the reduced state $x_{(j,d)}$. In this case, \eqref{eq:dynamics} can be simplified to
\begin{equation}
x_{(j,d)}[k+1] = A_{(j,d)} x_{(j,d)}[k] + B_{(j,d)} u_{(j,d)}[k] + \delta[k] e_{w(j,d)}. \label{eq:sim_dyn}
\end{equation}

We can express \eqref{eq:sim_dyn} in the form of \eqref{eq:system}, and formulate the $j$-th local feasibility test as
\begin{eqnarray}
&&\begin{bmatrix}I & -A_{(j,d)} & \cdots & 0\\
0 & \ddots & \ddots & \vdots\\
\vdots & \ddots & \ddots & -A_{(j,d)} \\
0 & \cdots & 0 & I
\end{bmatrix}
\begin{bmatrix}x_{(j,d)}[T+1] \\ \vdots \\ x_{(j,d)}[1]\end{bmatrix} = \begin{bmatrix}B_{(j,d)}u_{(j,d)}[T] \\ \vdots \\ B_{(j,d)}u_{(j,d)}[1] \\ e_{w(j,d)} \end{bmatrix} \nonumber\\ \nonumber\\
&&\Sp{x_{(j,d)}[k]} \subseteq (\mathcal{S}_{x(j,d)}[k])_{w(j,d)} \text{  for  } k = 1,...,T+1 \nonumber\\
&&\Sp{u_{(j,d)}[k]} \subseteq (\mathcal{S}_{u(j,d)}[k])_{w(j,d)} \text{  for  } k = 1,...,T
\label{eq:system_1}
\end{eqnarray}
where $(\mathcal{S}_{x(j,d)}[k])_{w(j,d)}$ and $(\mathcal{S}_{u(j,d)}[k])_{w(j,d)}$ are the $w(j,d)$-th column of $\mathcal{S}_{x(j,d)}[k]$ and $\mathcal{S}_{x(j,d)}[k]$ respectively. 

To prove the equivalence between the global feasibility test and local feasibility test, we define the embedding linear operators $E_x(\cdot)$ on $x_{(j,d)}[k]$ and $E_u(\cdot)$ on $u_{(j,d)}[k]$, which simply add appropriate zero padding such that $E_x(x_{(j,d)}[k]) = (R[k])_j$ and $E_u(u_{(j,d)}[k]) = (M[k])_j$, where $(R[k])_j$ and $(M[k])_j$ are the $j$-th column of $R[k]$ and $M[k]$ respectively. In particular, we have that $E_x(e_{w(j,d)}) = e_j$. The equivalence between local and global feasibility test is given by the following theorem.
\begin{theorem}
$(x_{(j,d)},u_{(j,d)})$ is a feasible solution for \eqref{eq:system_1} if and only if $(E_x(x_{(j,d)}),E_u(u_{(j,d)}))$ form the $j$-th column of $(R,M)$, where $(R,M)$ is a feasible solution for \eqref{eq:system}. 
\label{thm:local_feasibility}
\end{theorem}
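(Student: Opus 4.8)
\emph{Proof proposal.} The plan is to exploit the fact that the global feasibility test \eqref{eq:system} decouples column by column, and that each column subproblem, after deleting the rows and columns that are structurally forced to vanish, is \emph{exactly} the reduced test \eqref{eq:system_1}; the operators $E_x$ and $E_u$ are precisely the bookkeeping that undoes this deletion. So I would first observe that both the recursion $R[k+1] = AR[k] + BM[k]$ with $R[1]=I$ and the membership constraints $R\in\mathcal{S}_x$, $M\in\mathcal{S}_u$ act independently on each column: the $j$-th column of \eqref{eq:system} asserts the existence of vectors $(R[k])_j,(M[k])_j$ with $(R[k+1])_j = A(R[k])_j + B(M[k])_j$, $(R[1])_j=e_j$, $\Sp{(R[k])_j}\subseteq(\mathcal{S}_x[k])_j$ and $\Sp{(M[k])_j}\subseteq(\mathcal{S}_u[k])_j$. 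Hence \eqref{eq:system} is feasible iff each such $j$-th column subproblem is, and it suffices to show the $j$-th column subproblem is feasible iff \eqref{eq:system_1} is, with solutions related by $E_x,E_u$.

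The heart of the argument is then a structural identity: restriction to the index block of $x_{(j,d)}$ (the states indexed by $\mathcal{F}_{(j,d+1)}$) commutes with one step of the plant dynamics, i.e. $A\,E_x(v) = E_x(A_{(j,d)}v)$ for any $v$ supported on that block, and $B\,E_u(\nu) = E_x(B_{(j,d)}\nu)$ for any reduced control vector $\nu$. I would prove these by index-chasing: $A_{(j,d)},B_{(j,d)}$ keep only the rows/columns of $A,B$ meeting $\mathcal{F}_{(j,d+1)}$; since $\mathcal{S}_x$ is $(A,d)$ sparse the state response lives on $\mathcal{F}_{(j,d)}$, so $A$ applied to such a vector reaches only $\mathcal{F}_{(j,d+1)}$ and no nonzero column of $A$ used is deleted, while since $\Sp{B}\mathcal{S}_u$ is $(A,d+1)$ sparse every control $u_i$ that may be used for disturbance $j$ has $\{\ell\,|\,B_{\ell i}\neq 0\}\subseteq\mathcal{F}_{(j,d+1)}$, so no nonzero entry of the relevant columns of $B$ is deleted either — these are exactly the places where conditions 2 and 3 of Definition \ref{dfn:2} enter. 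Combined with $E_x(e_{w(j,d)})=e_j$ and linearity of $E_x$, this shows that $(x_{(j,d)},u_{(j,d)})$ satisfies the equality in \eqref{eq:system_1} iff $(E_x(x_{(j,d)}[k]),E_u(u_{(j,d)}[k]))$ satisfies the $j$-th column equality of \eqref{eq:system}.

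It remains to match the support constraints, in both directions. For ``if'', given feasible $(R,M)$ I would use $R\in\mathcal{S}_x$ with $\mathcal{S}_x$ $(A,d)$ sparse to get $\Sp{(R[k])_j}\subseteq\mathcal{F}_{(j,d)}$, so $(R[k])_j$ is supported inside the block of $x_{(j,d)}$ and has a well-defined restriction $x_{(j,d)}[k]$ with $E_x(x_{(j,d)}[k])=(R[k])_j$; similarly $M\in\mathcal{S}_u$ together with the $(A,d+1)$-sparseness of $\Sp{B}\mathcal{S}_u$ forces every control index in $(M[k])_j$ to lie in $u_{(j,d)}$, yielding $u_{(j,d)}[k]$ with $E_u(u_{(j,d)}[k])=(M[k])_j$; restricting the column constraints to these blocks gives precisely the support constraints in \eqref{eq:system_1}, since the reduced constraint spaces are by definition the corresponding sub-blocks. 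For ``only if'', the same relations are read backwards: the zero-padding in $E_x,E_u$ places zeros exactly on indices outside $\mathcal{F}_{(j,d+1)}$, which lie outside the supports of $(\mathcal{S}_x[k])_j$ and $(\mathcal{S}_u[k])_j$, so an embedded reduced solution satisfies the global support constraints. Together with the previous paragraph this closes both directions.

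\emph{Main obstacle.} The only genuinely delicate point is the structural identity in the second paragraph — verifying that forming $(A_{(j,d)},B_{(j,d)})$ by deleting the out-of-region rows and columns of $(A,B)$ destroys none of the dynamics relevant to disturbance $j$ over a single time step. This is a careful index-chasing argument that relies entirely on the deliberate one-layer gap between the $(A,d)$ constraint on $\mathcal{S}_x$ and the $(A,d+1)$ constraint on $\Sp{B}\mathcal{S}_u$ in Definition \ref{dfn:2}; everything else reduces to the routine column-wise decoupling of a linear recursion and its sparsity pattern.
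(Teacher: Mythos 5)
Your proposal is correct and follows essentially the same route as the paper: the structural commutation identities $A\,E_x(v)=E_x(A_{(j,d)}v)$ and $B\,E_u(\nu)=E_x(B_{(j,d)}\nu)$ that you isolate as the heart of the argument are exactly the paper's Lemma \ref{lemma:local_feasibility}, proved there by the same observation (that the deleted rows/columns of $A$ and $B$ only meet zero entries, thanks to conditions 2 and 3 of Definition \ref{dfn:2}), and the column-wise decoupling plus sparsity matching completes both directions just as in the paper's proof.
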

\begin{proof}
See Appendix.
\end{proof}

\begin{remark}[Robustness to local changes]Equation \eqref{eq:system_1} indicates that the solution $(x_{(j,d)}, u_{(j,d)})$ only depends on the local plant model $(A_{(j,d)}, B_{(j,d)})$. Therefore, when a plant model changes locally, we only need to resolve some of the local feasibility tests to check whether it is still localizable.
\end{remark}

For each local feasibility test \eqref{eq:system_1}, we can explicitly express the state trajectory $x_{(j,d)}[k]$ over the horizon $k=2,\dots,T+1$ as a function of control input $u_{(j,d)}[k]$ as 
\begin{eqnarray}
X_{(j)} &=& W_{(j)} + C_{(j)} U_{(j)} \nonumber \\
\Sp{X_{(j)}} &\subseteq& ST_x^{(j)} \nonumber \\
\Sp{U_{(j)}} &\subseteq& ST_u^{(j)}
\label{eq:local_feas}
\end{eqnarray}
where
\begin{eqnarray}
&X_{(j)}& = \begin{bmatrix}x_{(j,d)}[T+1] \\ \vdots \\ x_{(j,d)}[2]\end{bmatrix}, 
W_{(j)} = \begin{bmatrix}A_{(j,d)}^T \\ \vdots \\ A_{(j,d)}\end{bmatrix} e_{w(j,d)}, \nonumber\\
&C_{(j)}& = \begin{bmatrix}B_{(j,d)} & \cdots & A_{(j,d)}^{T-1}B_{(j,d)}\\
 & \ddots & \vdots\\
0 &  & B_{(j,d)}
\end{bmatrix}, U_{(j)} = \begin{bmatrix}u_{(j,d)}[T] \\ \vdots \\ u_{(j,d)}[1] \end{bmatrix} \nonumber
\end{eqnarray}
with $x_{(j,d)}[1] = e_{w(j,d)}$, and $ST_x^{(j)}$ and $ST_u^{(j)}$ the stacked sparsity constraints that can be derived directly from \eqref{eq:system_1}. The form of \eqref{eq:local_feas} is the same as the traditional way to check controllability of a system. We can then interpret state feedback localizability as a generalization of controllability  within a space-time region.

In the next section, we will show that the global LQR problem for localizable system can be decomposed into several local optimization problems with \eqref{eq:local_feas} being the constraint on local plant dynamics for each local optimization problem. However, we will first show that \emph{any} state feedback localizable system admits a localized controller implementation.  




\subsection{Controller Synthesis and Implementation}
After solving \eqref{eq:system_1} for all $j$, we can reconstruct the solution $(\{R[k]\}_{k=1}^{T}, \{M[k]\}_{k=1}^{T})$ of \eqref{eq:system} by applying Theorem \ref{thm:local_feasibility}. For each time step $k$, the controller can be implemented via
\begin{eqnarray}
u[k] &=& \sum_{\tau = 1}^{T} M[\tau] w_e [k-\tau] \label{eq:rhc_u}\\
x_r [k+1] &=& \sum_{\tau = 1}^{T-1} R[\tau+1] w_e [k-\tau] \label{eq:rhc_xr}\\
w_e [k] &=& x[k+1] - x_r [k+1]. \label{eq:rhc_we}
\end{eqnarray}

At each time step, every controller (i) collects estimated disturbances from its backward space-time region, (ii) computes its control strategy and reference trajectory based on the collected estimated disturbances by \eqref{eq:rhc_u}-\eqref{eq:rhc_xr}, (iii) applies the control action and measures its own state, and (iv) computes its own estimated disturbance by \eqref{eq:rhc_we} and broadcasts it out to its forward space-time region. This general strategy, as implemented at a single node, is illustrated in Figure \ref{fig:lightcones}.

We can now formally discuss the relation between the backward space-time region and forward space-time region in Figure \ref{fig:lightcones}. In particular, the backward space-time region is defined by the sparsity pattern of the rows of $R$ and $M$, and the forward space-time region by the sparsity pattern of the columns of $R$ and $M$.

For example, from \eqref{eq:rhc_u} and the sparsity pattern of the $i$-th row of $M$, each local controller $u_i$ only needs to collect $(w_e)_j$, for $j \in \bigcap_{\ell} \mathcal{E}_{(\ell,d+1)}$, with $l \in \{\ell \, | \, B_{\ell i} \not = 0\}$, to generate its control action. Similarly, to generate $(x_r)_i$, we only need to collect $(w_e)_j$, for $j \in \mathcal{E}_{(i,d)}$.


For the forward space-time region, we need to show that the distributed controller \eqref{eq:rhc_u} - \eqref{eq:rhc_we} indeed yields the desired closed loop responses $R$ and $M$. The key is to show that $w_e$ is a perfect estimate of the disturbance $w$. Substituting \eqref{eq:dynamics} into \eqref{eq:rhc_we}, we get
\begin{eqnarray}
w_e[k] &=&A x[k] + B u[k] + w[k] - x_r[k+1] \nonumber \\
            &=&A (w_e[k-1] + x_r[k]) + B u[k] - x_r[k+1] + w[k] \label{eq:w_e_1}
\end{eqnarray}
Substituting \eqref{eq:rhc_u} and \eqref{eq:rhc_xr} into \eqref{eq:w_e_1} and using the identity $R[k+1] = A R[k] + B M[k]$ for $k=1, \dots, T-1$ with $R[1] = I$, we can derive $w_e[k] = w[k]$. As $w_e$ is a perfect estimate of $w$, the transfer functions from $w$ to $x$ and $w$ to $u$ are indeed $R$ and $M$. Therefore, a local disturbance $w_j$ only affects $x_i$ with $i \in \mathcal{F}_{(j,d)}$, and we only need to activate the local controls $u_i$ for $\{\ell \, | \, B_{\ell i} \not = 0\} \subseteq \mathcal{F}_{(j,d+1)}$ to localize the effect of the disturbance.

The localized synthesis procedure follows from a combination of the sparsity patterns of the rows and columns of $R$ and $M$. Focussing on $R$, we see that to synthesize the reference trajectory generator for $(x_r)_i$, we need to collect the solutions of the $j$-th local feasibility tests, for $j \in \mathcal{E}_{(i,d)}$. To solve the $j$-th local feasibility test, we need to know the plant model associated with $\mathcal{F}_{(j,d+1)}$. Combining these two arguments, we can synthesize \eqref{eq:rhc_u} - \eqref{eq:rhc_we} via a local feasibility test following by a local update procedure. A complete localized synthesis procedure for scalar sub-systems can be found in Algorithm 1 of \cite{2014_Wang_ACC}.


Finally, we can examine the communication delay constraints in the controller implementation. For simplicity, we assume that each sensor computes its own reference trajectory \eqref{eq:rhc_xr} and that the sensing delay is the same at every state. In order not to degrade the system's performance, the reference trajectory $x_r[k]$ must be generated before the state measurement $x[k]$ is obtained. To ensure this property, we require the communication delay between any two states within the localized region to be less than the plant propagation delay. For example, if the constraint space $\mathcal{S}_x$ satisfies
\begin{equation}
\Sp{A}^{\min(k,d)} \subseteq \mathcal{S}_x[k+1] \nonumber
\end{equation}
for all $k < T$, a simple proof by induction shows that all local reference trajectories can be generated before they are needed. In this case, $w_e[k]$ can be calculated once $x[k+1]$ is available. Then, the delay constraint on $M$ is just the usual information sharing constraint from $x$ to $u$, but delayed by one time step since $w_e[k]$ is calculated from $x[k+1]$.


\subsection{Sensitivity Analysis}
An advantage of using this receding-horizon like control scheme \eqref{eq:rhc_u} - \eqref{eq:rhc_we} is its numerical robustness. Assume some computation error in the feasibility test, that is, $R[\tau+1] = A R[\tau] + B M[\tau] - \Delta_{\tau}$ for some perturbation matrices $\Delta_{\tau}$, $\tau=1, \dots, T-1$. From \eqref{eq:w_e_1}, we can derive
\begin{equation}
w_e[k] = w[k] + \sum_{\tau=1}^{T-1} \Delta_{\tau} w_e[k-\tau]. \label{eq:sen_analysis}
\end{equation}
If the norm of each $\Delta_{\tau}$ is small enough, then $w_e$ will be bounded and hence so will $x_r$ and $u$. Notice that when a system cannot be exactly localized, we can use \eqref{eq:sen_analysis} to bound the performance degradation as well.

Assume now that there is only one disturbance $w[0]$ at time $0$. If we neglect the higher order terms for $\Delta_{\tau}$, we have that
\begin{eqnarray}
x[T] &\cong& \sum_{\tau=1}^{T-1} R[T-\tau]\Delta_{\tau} w[0] \nonumber \\
x[T+k] &\cong& 0 \label{eq:good}
\end{eqnarray}
for $k > T$ -- through a more detailed, but standard, analysis, one can use the small gain theorem to quantify the maximum allowable $\Delta_\tau$ for stability.  On the other hand, if we do not use \eqref{eq:rhc_u} - \eqref{eq:rhc_we}, and estimate $w$ directly from the measured state and control action instead as
\begin{equation}
u[k] = \sum_{\tau = 1}^{T-1} M[\tau] \Big[ x[k-\tau+1] - A x[k-\tau] - B u[k-\tau] \Big], \label{eq:simple}
\end{equation}
then the computation error will accumulate over time for unstable $A$. Using \eqref{eq:simple} for the same disturbance $w[0]$, we will have
\begin{eqnarray}
x[T] &=& \sum_{\tau=1}^{T-1} A^{T-1-\tau}\Delta_{\tau} w[0] \nonumber \\
x[T+k] &=& A^k x[T]. \label{eq:not_good}
\end{eqnarray}
If the original system $(A,B)$ is unstable, then $x[k+T]$ would grow unbounded for large $k$ for \emph{any} non-zero $\Delta_{\tau}$.

The robustness of the receding-horizon like implementation can also be interpreted in the frequency domain. After solving the feasibility test, we obtain a pair of localized transfer functions $(R,M)$ such that $x = R w$ and $u = M w$. A naive approach to implementing the controller is via $u = M R^{-1} x = M(zI-A) x - M B u$, which results in \eqref{eq:not_good}. This is ill-conditioned since the controller is attempting to cancel the unstable open-loop dynamics. Instead, the frequency domain interpretation of equations \eqref{eq:rhc_u} - \eqref{eq:rhc_we} are
\begin{eqnarray}
u &=& M w_e \nonumber\\
x_r &=& (R - \frac{1}{z} I) w_e \nonumber\\
\frac{1}{z} w_e &=& x - x_r. \nonumber
\end{eqnarray}
We actually use another feedback loop to compute $w_e$ based on $x$ without explicitly inverting the plant. Nevertheless, we need to ensure that the communication between controllers is fast enough such that the delay introduced by this extra feedback loop does not degrade the system performance and maintains localizability.

\section{Localized LQR Optimal Control} \label{sec:lqr}

\subsection{Optimal Control Problem Formulation}
The localized optimal control problem can be formulated as the following affinely constrained convex program:
\begin{eqnarray}
\underset{\{ R[k] \}_{k=1}^T, \{ M[k] \}_{k=1}^T}{\text{minimize  }} &&f( \{ R[k] \}_{k=1}^T, \{ M[k] \}_{k=1}^T) \nonumber\\
\text{subject to } &&\eqref{eq:system} \label{eq:convex}
\end{eqnarray}
for some convex function $f$. Localizability is determined by the feasibility of this optimization problem. Note that as the closed loop system is constrained to be FIR,  traditionally infinite dimensional objectives and constraints, such as those present in $\mathcal{H}_2$ optimal control, trivially admit a finite dimensional representation.

\subsection{LLQR Optimal Controller for Impulse Disturbances}\label{sec:lqr_im}
For a general objective function in \eqref{eq:convex}, the optimization cannot necessarily be solved in a localized way. However, when the cost function is chosen to be LQR, leading to a LLQR optimal control problem, \eqref{eq:convex} can be decomposed in exactly the same way as the local feasibility test. Thus, the optimal LLQR controller can be synthesized based on local optimizations followed by local update procedures, which are similar in spirit to Algorithm 1 in \cite{2014_Wang_ACC}.

Given a positive semi-definite matrix $\mathcal{Q}$ and a positive definite matrix $\mathcal{R}$, we define the LLQR problem as
\begin{eqnarray}
\underset{\{ R[k], M[k] \}_{k=1}^T}{\text{minimize }} &&\sum_{k=1}^{T} \Trace(R[k]^\top \mathcal{Q} R[k] + M[k]^\top \mathcal{R} M[k]) \nonumber\\
\text{subject to }&&\eqref{eq:system}. \label{eq:lqr_obj}
\end{eqnarray}
Letting $n$ be the number of states, the objective in \eqref{eq:lqr_obj} can be written as
\begin{equation}
\sum_{j=1}^{n} \sum_{k=1}^{T} \Trace(R[k]^\top \mathcal{Q} R[k] e_j e_j^\top + M[k]^\top \mathcal{R} M[k] e_j e_j^\top). \label{eq:lqr_trace}
\end{equation}
As trace is invariant under cyclic permutation, we can write \eqref{eq:lqr_trace} as
\begin{equation}
\sum_{j=1}^{n} \sum_{k=1}^{T} (R[k])_j^\top \mathcal{Q} (R[k])_j + (M[k])_j^\top \mathcal{R} (M[k])_j. \nonumber
\end{equation}
Using Theorem \ref{thm:local_feasibility}, each $(R[k])_j$ and $(M[k])_j$ can be solved for via the $j$-th local feasibility test. Thus, we can decompose the full objective in \eqref{eq:lqr_obj} into the sum of local objectives and formulate the reduced LLQR problem for the reduced trajectory $x_{(j,d)}[k]$ and $u_{(j,d)}[k]$ as
\begin{align}
\underset{\{x_{(j,d)}[k]\}_{k=1}^T,\{u_{(j,d)}[k]\}_{k=1}^T}{\text{minimize }}& \sum_{k=1}^{T} x_{(j,d)}[k]^\top \mathcal{Q}_j x_{(j,d)}[k] + u_{(j,d)}[k]^\top \mathcal{R}_j u_{(j,d)}[k] \nonumber\\
\text{subject to } &\eqref{eq:local_feas} \label{eq:lqr_op_0}
\end{align}
where $\mathcal{Q}_j$ and $\mathcal{R}_j$ are submatrices of $(\mathcal{Q},\mathcal{R})$ consisting of the columns and rows associated with $x_{(j,d)}$ and $u_{(j,d)}$. To lighten notational burden, we drop all subscripts and write \eqref{eq:local_feas} as $X = W + C U$, $\Sp{X} \subseteq ST_x$, and $\Sp{U} \subseteq ST_u$. In addition, we define $(\mathcal{\overline{Q}}_j, \mathcal{\overline{R}}_j)$ as the augmented block diagonal matrix with $(\mathcal{Q}_j, \mathcal{R}_j)$ along the block diagonal. The objective in \eqref{eq:lqr_op_0} can therefore be written in a single term $X^\top \mathcal{\overline{Q}}_j X + U^\top \mathcal{\overline{R}}_j U$. The analytic solution of \eqref{eq:lqr_op_0} can be derived as follows.

Noting that $\Sp{U} \subseteq ST_u$ is an affine constraint that forces some of the elements of $U$ to be zero, we can eliminate this constraint by selecting the appropriate columns of matrix $C$ to form a reduced matrix $C_{r}$ such that
\begin{equation}
C U = C_{r} U_{r} \nonumber
\end{equation}
where $U_{r}$ is the reduced \emph{unconstrained} control signal.

Similarly, we can impose the constraint $\Sp{X} \subseteq ST_x$ by selecting the corresponding rows on $W$ and $C_{r}$. Specifically, we define matrices $W_a$ and $C_a$ by selecting the rows of $W$ and $C_r$ according to the positions of the zero entries in $ST_x$, and the rows of $W_b$ and $C_b$ according to the positions of the nonzero entries in $ST_x$. The equation $X = W + C_r U_r$ is then separated into two parts: 
\begin{eqnarray}
0 &=& W_{a} + C_{a} U_{r} \label{eq:lqr_1}\\
X_{r} &=& W_{b} + C_{b} U_{r} \label{eq:lqr_2}
\end{eqnarray}
where \eqref{eq:lqr_1} corresponds to the sparsity constraints imposed by $ST_x$, and \eqref{eq:lqr_2} enforces that the non-zero states $X_r$ satisfy the dynamics. 

We define $(\mathcal{Q}_{r}, \mathcal{R}_{r})$ the submatrices of $(\mathcal{\overline{Q}}_j, \mathcal{\overline{R}}_j)$ associated with the support of $ST_x$ and $ST_u$ respectively. Then, \eqref{eq:lqr_op_0} becomes
\begin{align}
&\underset{X_r, U_r}{\text{minimize }} X_{r}^\top \mathcal{Q}_{r} X_{r} + U_{r}^\top \mathcal{R}_{r} U_{r} \nonumber\\
&\text{subject to } \textit{\eqref{eq:lqr_1} and \eqref{eq:lqr_2}} \label{eq:lqr_op}
\end{align}
Substituting \eqref{eq:lqr_2} into the objective function, we have
\begin{eqnarray}
\underset{U_r}{\text{minimize }}  &&U_{r}^\top (\mathcal{R}_{r}+C_{b}^\top  \mathcal{Q}_{r} C_{b}) U_{r}+ 2 U_{r}^\top (C_{b}^\top  \mathcal{Q}_{r} W_{b}) \nonumber\\
\text{subject to }  &&0 = W_{a} + C_{a} U_{r} \label{eq:convex_lqr}
\end{eqnarray}

If this problem is feasible, then we can solve the quadratic optimization via its optimality conditions
\begin{eqnarray}
\begin{bmatrix} \mathcal{R}_{r}+C_{b}^\top  \mathcal{Q}_{r} C_{b} & C_{a}^{\top}\\
 C_{a} & 0
\end{bmatrix}
\begin{bmatrix} U_r^{*}\\
\lambda^{*}
\end{bmatrix}=
\begin{bmatrix} -C_{b}^\top  \mathcal{Q}_{r} W_{b}\\
-W_{a}
\end{bmatrix}. \label{eq:optimality}
\end{eqnarray}

In summary, the reduced LLQR problem is subject to the constraint in \eqref{eq:convex_lqr}, and has the optimal solution given by \eqref{eq:optimality}, implying that the LLQR optimal controller can be synthesized by solving a set of linear equations. Although the problem size of \eqref{eq:optimality} is proportional to time $T$, it is independent to the size of the global plant model $A$ since we only need to know the local plant model $A_{(j,d)}$ for each $j$-th LLQR feasibility test. This property is especially favorable for large scale distributed systems. Future work will look to find a dynamic programming based solution to remove this linear scaling with $T$.

\subsection{LLQR Optimal Controller for AWGN Disturbances}
In this subsection, we show that the controller synthesized by \eqref{eq:lqr_obj} is also mean square error optimal with respect to AWGN disturbances, i.e. we assume that $E[w[k]] = 0$ and $E[w[i] w[j]^\top] = \delta_{ij} I$ for all $i,j,k$. The objective is to minimize
\begin{equation}
E \Big[ \frac{1}{N} \sum_{k=1}^{N} x[k]^\top \mathcal{Q} x[k] + u[k]^\top \mathcal{R} u[k] \Big] \label{eq:lqr_obj2}
\end{equation}
for $N \rightarrow \infty$. For $k > T$, we have $x[k] = x_r[k] + w_e[k-1] = x_r[k] + w[k-1]$. As $x_r[k]$ is determined by $w[k-\tau]$ for $\tau = 2, \dots, T-1$, $x_r[k]$ and $w[k-1]$ are uncorrelated. Therefore,
\begin{equation}
E \Big[x[k]^\top \mathcal{Q} x[k]\Big] = E \Big[x_r[k]^\top \mathcal{Q} x_r[k] + w[k-1]^\top \mathcal{Q} w[k-1]\Big]. \nonumber
\end{equation}
Notice that $w[k-1]^\top \mathcal{Q} w[k-1] = \Trace(w[k-1]^\top \mathcal{Q} w[k-1])=\Trace(\mathcal{Q} w[k-1] w[k-1]^\top)$. As expectation distributes over the trace sum, we can write
\begin{equation}
E \Big[w[k-1]^\top \mathcal{Q} w[k-1]\Big] = \Trace(\mathcal{Q}). \label{eq:wq_1}
\end{equation}
Substituting \eqref{eq:rhc_xr} into $E \Big[x_r[k]^\top \mathcal{Q} x_r[k] \Big]$, we can then derive that
\begin{equation}
E \Big[x_r[k]^\top \mathcal{Q} x_r[k]\Big] = \sum_{\tau=2}^{T}\Trace(R[\tau]^\top \mathcal{Q} R[\tau]). \label{eq:wq_2}
\end{equation}
Summing \eqref{eq:wq_1} and \eqref{eq:wq_2} gives the first term in the objective in \eqref{eq:lqr_obj}. Similarly, we can show that for $k > T$,
\begin{equation}
E \Big[u[k]^\top \mathcal{R} u[k]\Big] = \sum_{\tau=1}^{T}\Trace(M[\tau]^\top \mathcal{R} M[\tau]). \nonumber
\end{equation}
Therefore, the objective in \eqref{eq:lqr_obj2} will converge to the objective in \eqref{eq:lqr_obj} as $N\to\infty$, implying that the controller synthesized for the impulse disturbances is also optimal in expectation for AWGN disturbances.

In general, imposing FIR or locality constraints degrades the transient performance of the controller, so our controller is not $\mathcal{H}_2$ optimal. However, we will see in the next section via a numerical example that by choosing appropriate $(d,T)$, the difference becomes negligible. 

\section{Performance Comparison}\label{sec:performance}

\begin{figure}[h!]
      \centering
      \subfigure[State for Open loop]{%
      \includegraphics[width=0.35\textwidth]{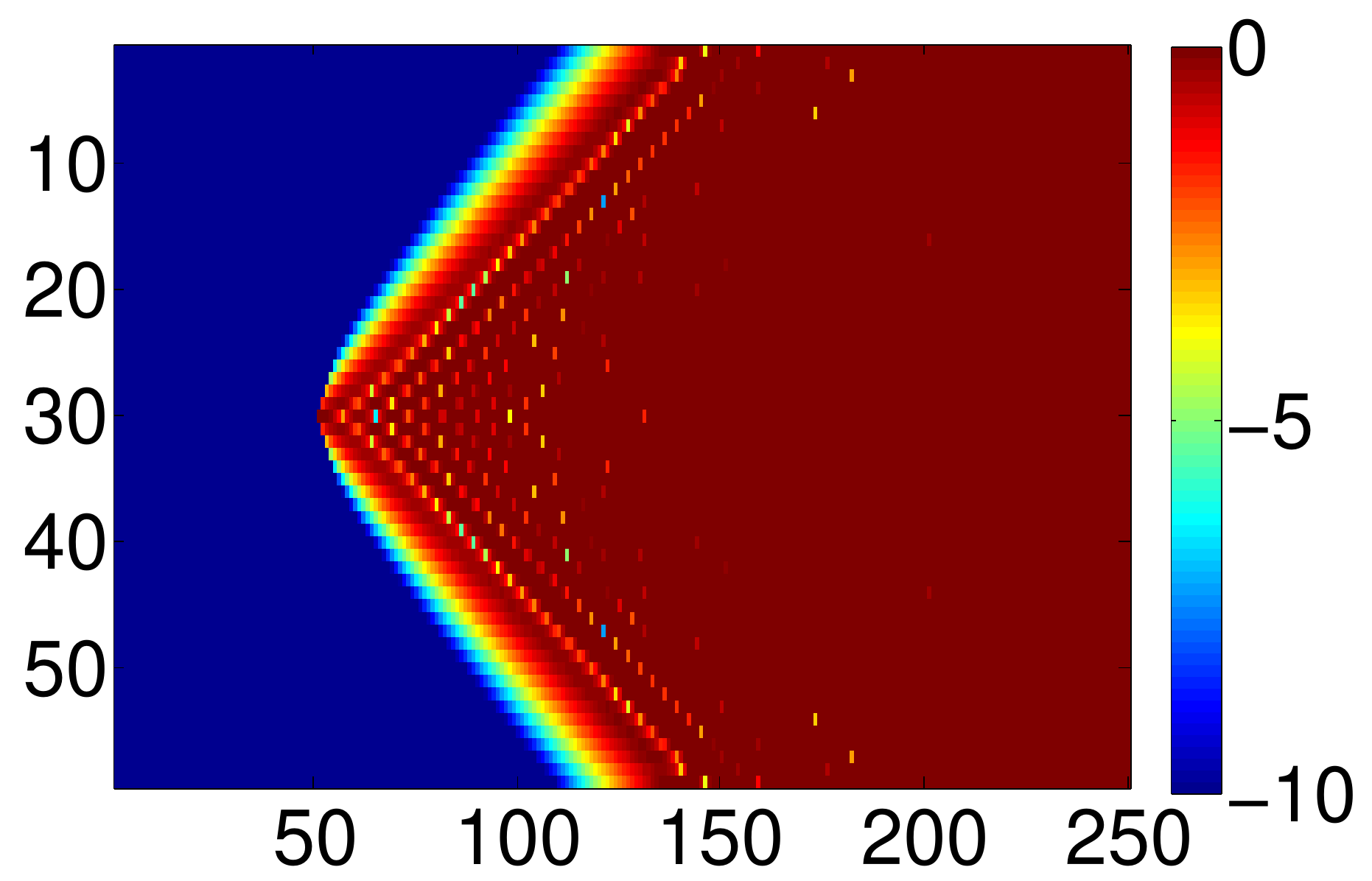}
      \label{fig:open}}
      
        \subfigure[State for Ideal $\mathcal{H}_2$]{%
          \includegraphics[width=0.35\textwidth]{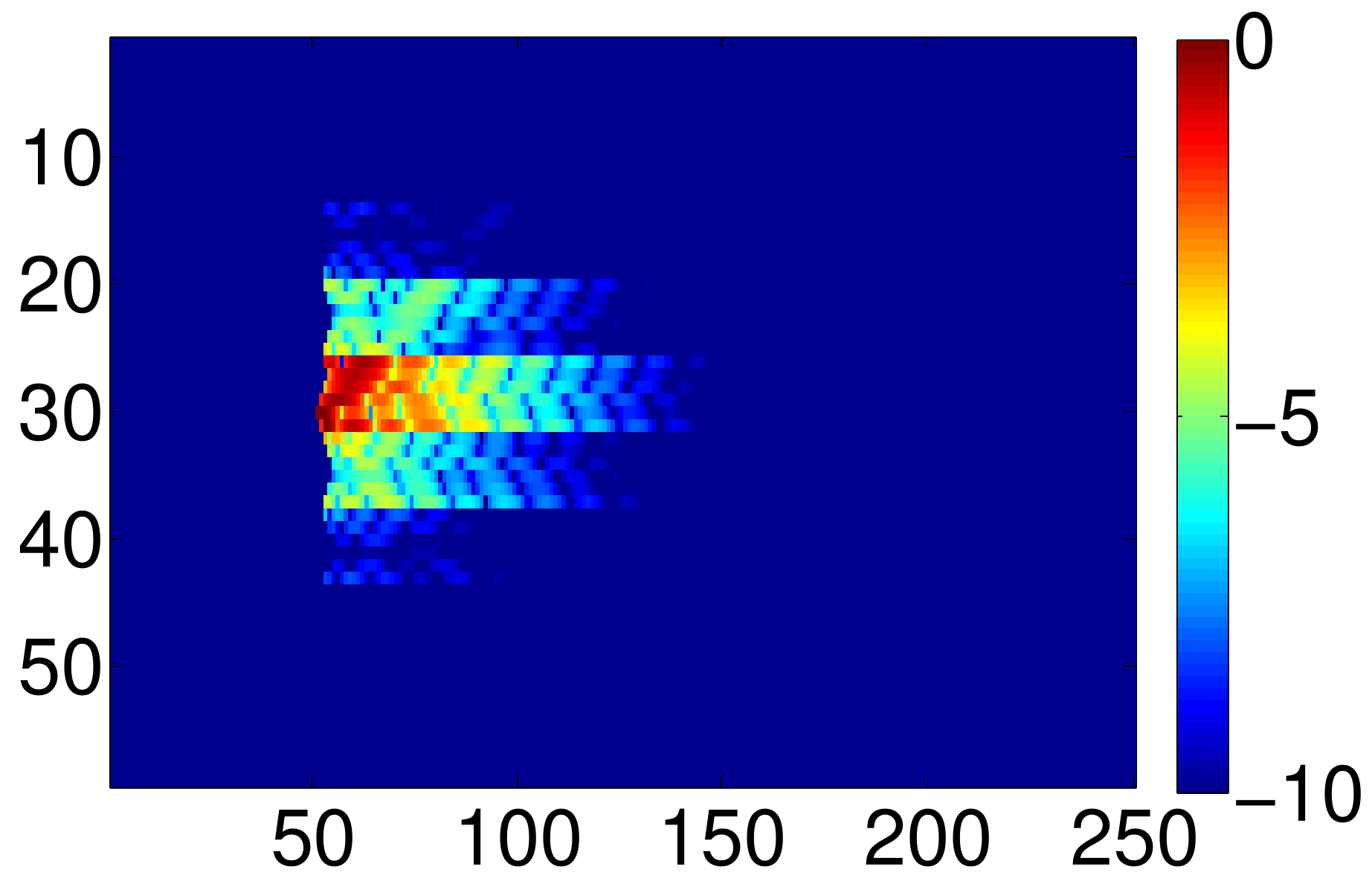}
          \label{fig:subfigure2}}
        \quad
        \subfigure[Control for Ideal $\mathcal{H}_2$]{%
          \includegraphics[width=0.35\textwidth]{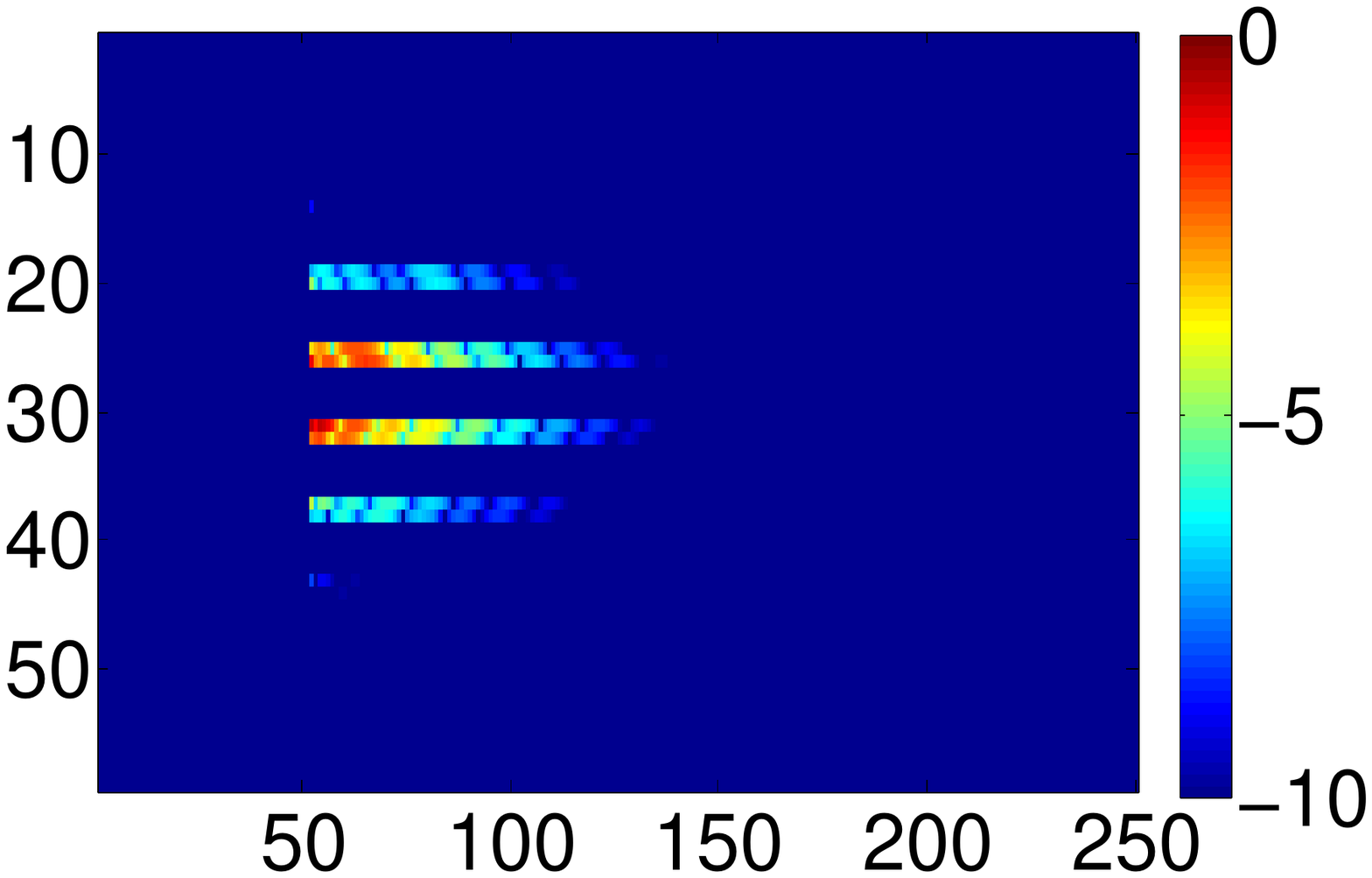}
          \label{fig:subfigure3}}
        
        \subfigure[State for Delayed $\mathcal{H}_2$]{%
          \includegraphics[width=0.35\textwidth]{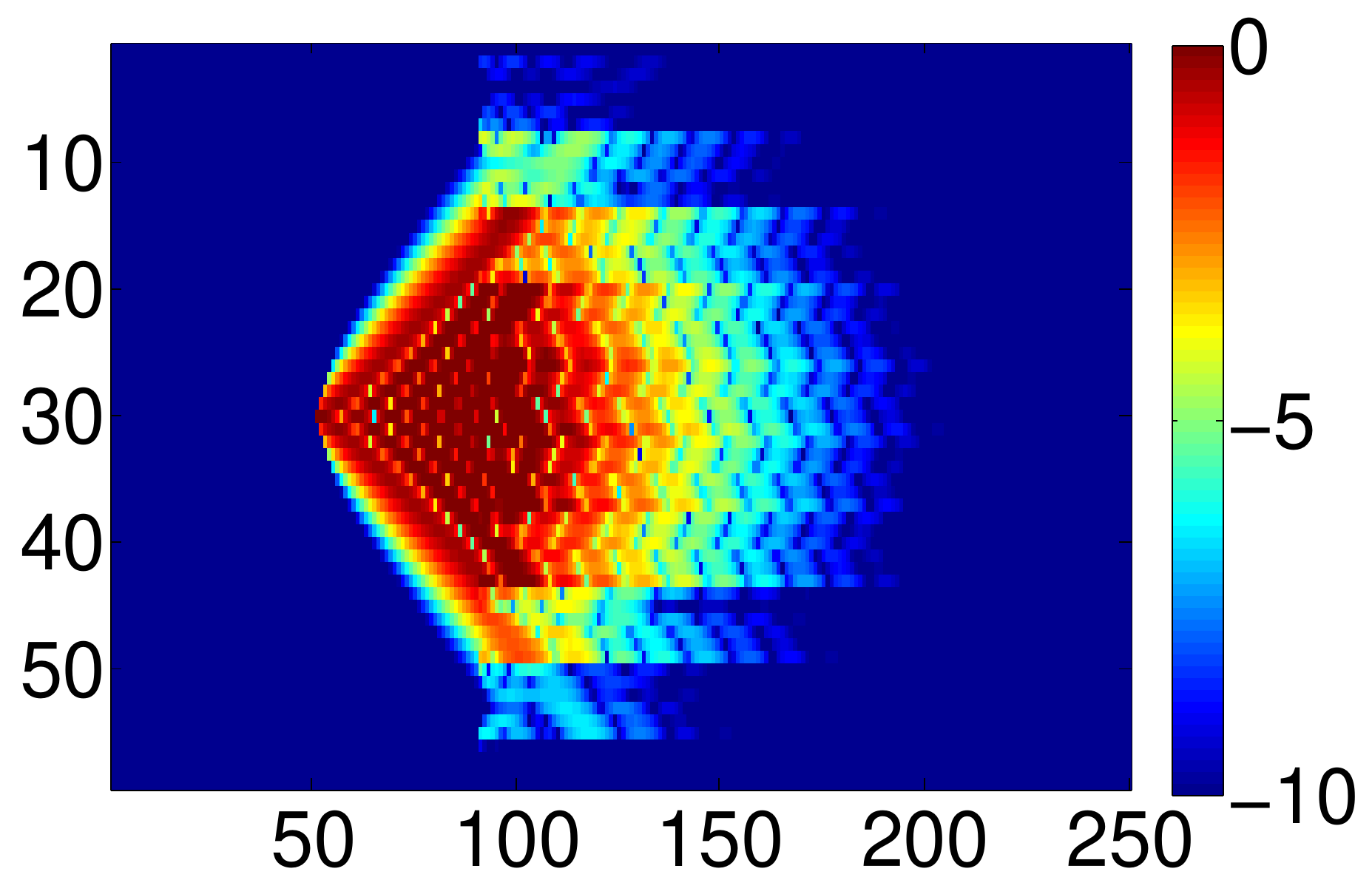}
          \label{fig:subfigure4}}
        \quad
        \subfigure[Control for Delayed $\mathcal{H}_2$]{%
          \includegraphics[width=0.35\textwidth]{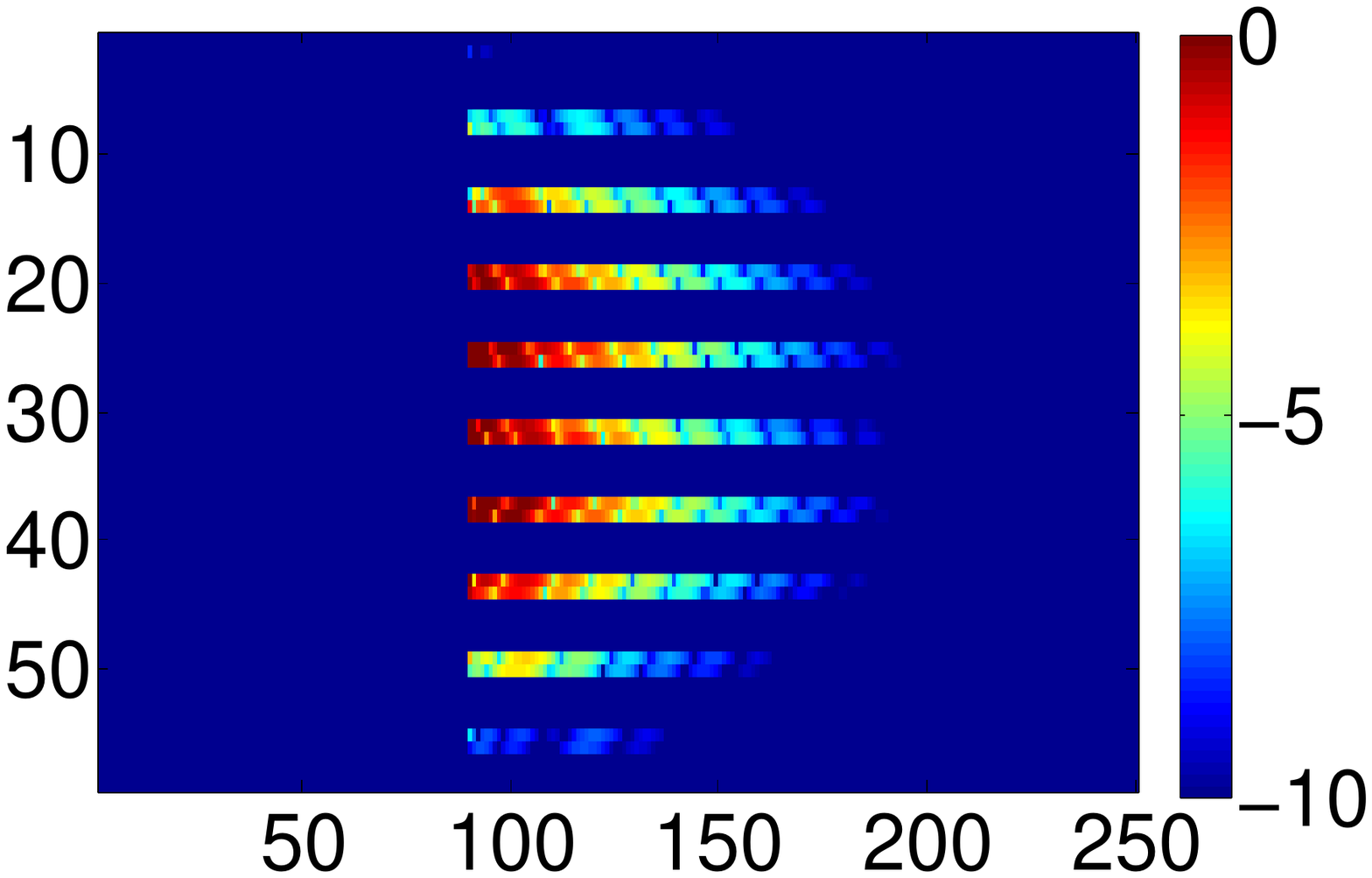}
          \label{fig:subfigure5}}
          
        \subfigure[State for Distributed]{%
          \includegraphics[width=0.35\textwidth]{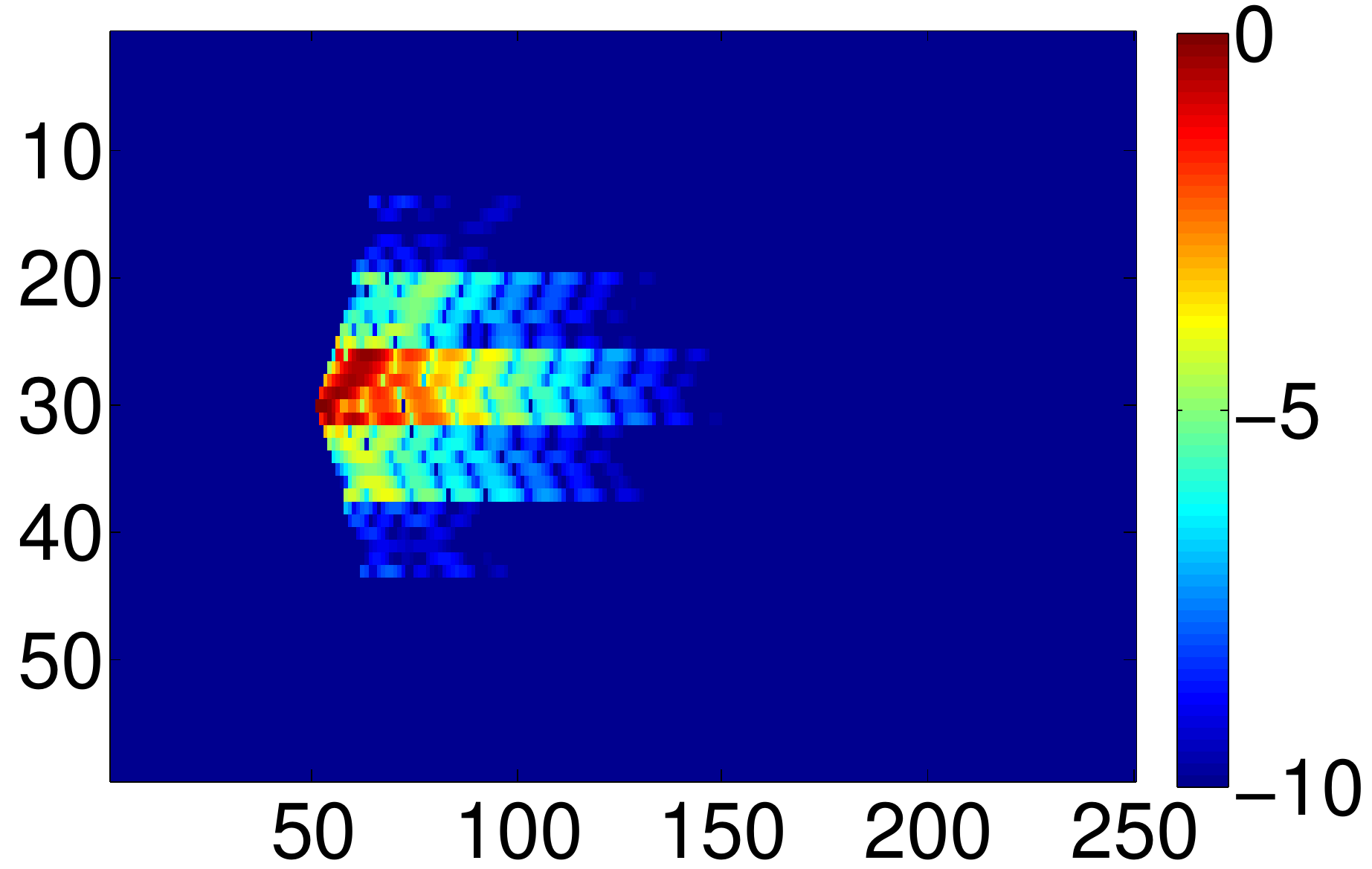}
          \label{fig:subfigure6}}
          \quad
        \subfigure[Control for Distributed]{%
          \includegraphics[width=0.35\textwidth]{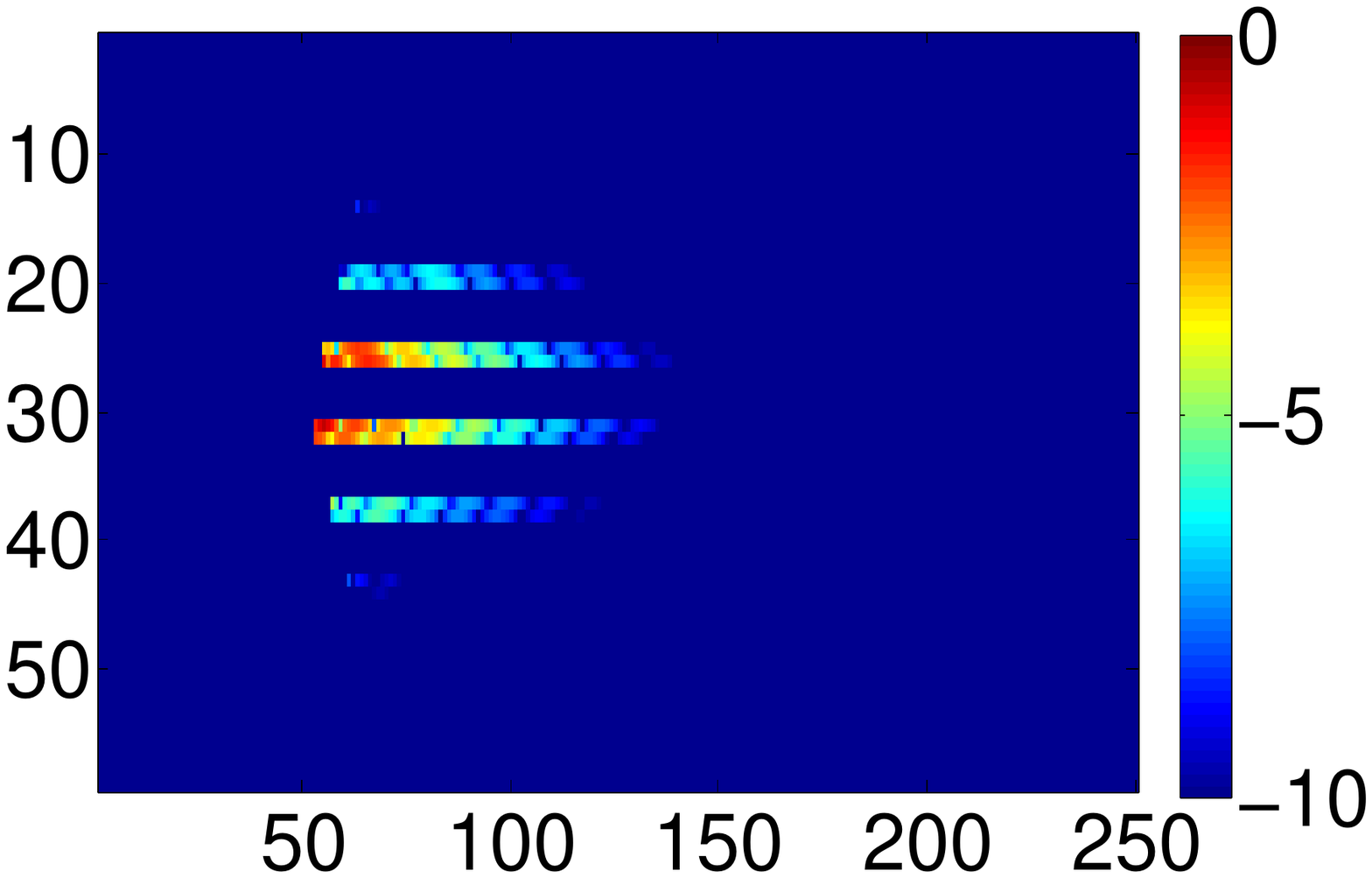}
          \label{fig:subfigure7}}          

        \subfigure[State for LLQR]{%
          \includegraphics[width=0.35\textwidth]{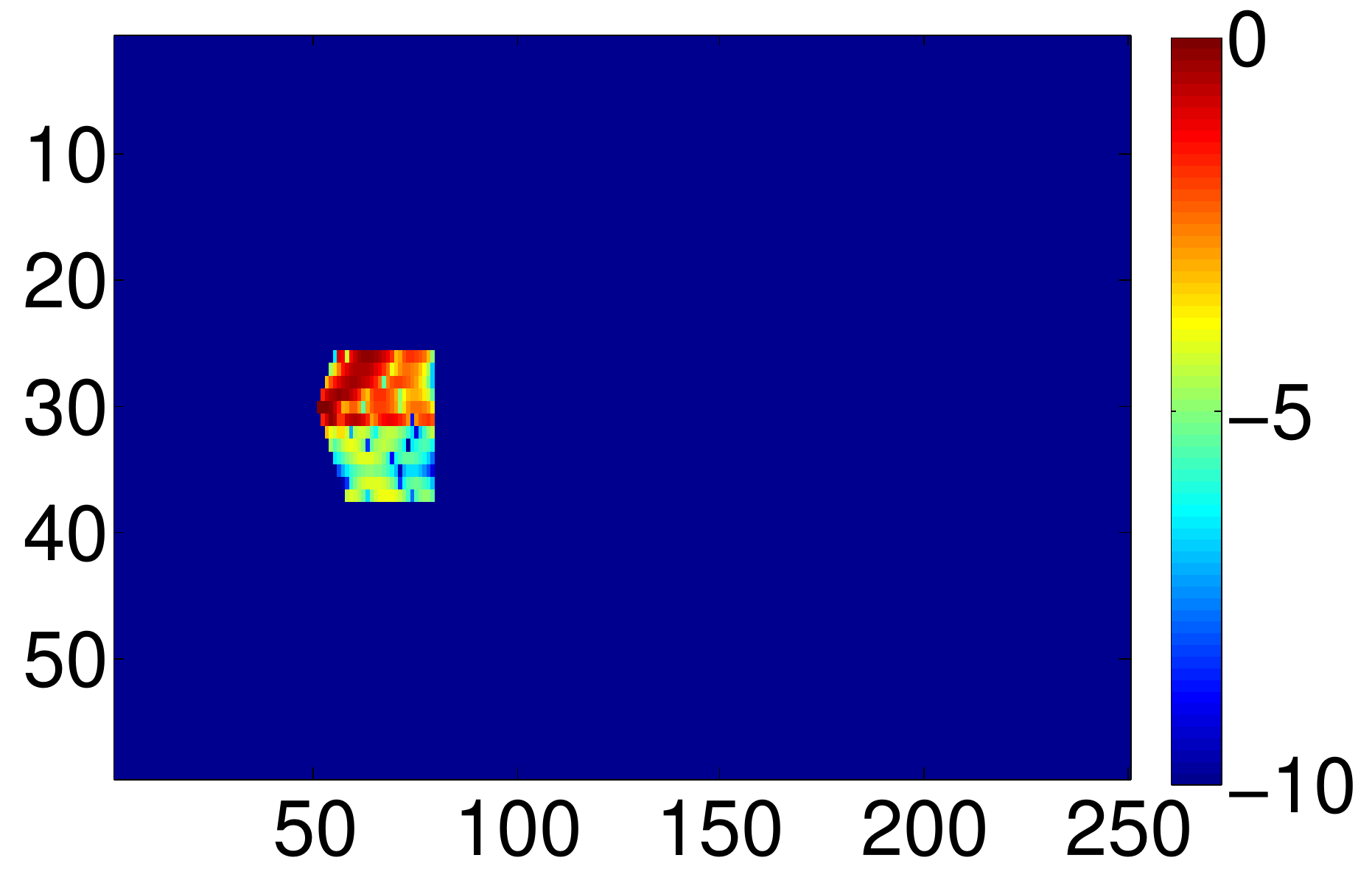}
          \label{fig:subfigure8}}
        \quad
        \subfigure[Control for LLQR]{%
          \includegraphics[width=0.35\textwidth]{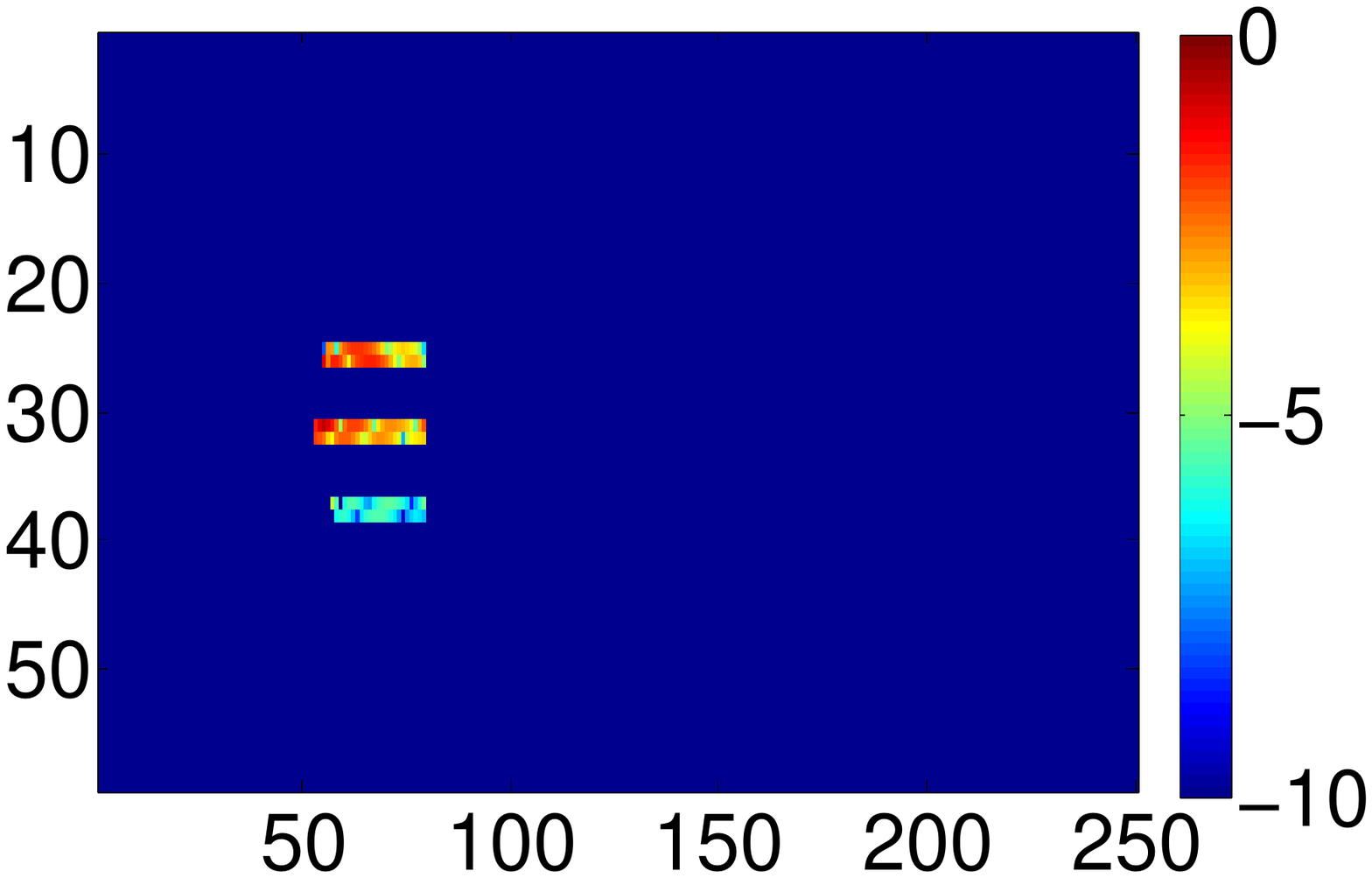}
          \label{fig:subfigure9}}
        \caption{ The log absolute value of state and control for a given disturbance at time $t = 50$. The horizontal axis represents time and the vertical axis represents state in space. The legend on the right shows the meaning of the colors.}\label{fig:schemes}
\end{figure}

In this section, we synthesize our LLQR optimal controller for a specific plant, and compare the performance with different classes of $\mathcal{H}_2$ optimal controllers from \cite{2013_Lamperski_H2}. In particular, we consider the centralized, delayed centralized and optimal distributed (with quadratically invariant (QI) information sharing constraints) $\mathcal{H}_2$ controllers. 

The plant model is given by $(A,B)$, where $A$ is a tridiagonal matrix with dimension $59$ given by
\begin{eqnarray}
A = \begin{bmatrix} 1 & 0.2 & \cdots & 0\\
-0.2 & \ddots & \ddots & \vdots\\
\vdots & \ddots & \ddots & 0.2\\
0 & \cdots & -0.2 & 1
\end{bmatrix}. \nonumber
\end{eqnarray}
The instability of the plant is quantified by the spectral radius of $A$, which is $\rho(A) = 1.0768 > 1$. $B$ is a $59 \times 20$ matrix with the $(6n+1,2n+1)$-th and $(6n+2,2n+2)$-th entries being $1$ and zero elsewhere, $n = 0, \dots, 9$. Even though our method can be applied to an arbitrary plant topology, we use this simple plant model as it leads to easily visualized representations of the aforementioned space-time regions.

For the LLQR controller, we choose $(d,T) = (9,29)$. For all controllers with communication delay constraints, we assume that the communication network has the same topology as the physical network, but the speed is $h = 1.5$ times faster than the speed at which dynamics propagate through the plant. The spatio-temporal constraint for the LLQR controller is given by
\begin{eqnarray}
\mathcal{S}_x &=& \sum_{k=1}^T \frac{1}{z^k} \Sp{A}^{\min(d, \lfloor h(k-1) \rfloor)} \nonumber\\
\mathcal{S}_u &=& \sum_{k=2}^T \frac{1}{z^k} \Sp{B^\top}\Sp{A}^{\min(d+1, \lfloor h(k-2) \rfloor)}
\end{eqnarray}
We then solve \eqref{eq:lqr_obj} with $\mathcal{Q}$ and $\mathcal{R}$ set to identity.

We illustrate the difference between the different control schemes by plotting the space-time evolution of a single disturbance hitting the middle state. A concrete realization of a localized forward space-time region is in Figures \ref{fig:subfigure8} and \ref{fig:subfigure9}: the effect of the disturbance is limited in both state and control action in time and space. 

Next, we calculate the optimal value for each controller and summarize the results in Table \ref{Table:1}. The objective is normalized with respect to that of the centralized $\mathcal{H}_2$ controller. Clearly, LLQR controller can achieve similar performance to that of the centralized one. Numerical evidence seems to suggest that this property holds for most plants that are localizable. For an ideal $\mathcal{H}_2$ controller, the closed loop response decays exponentially in both time and space, as indicated by Figures \ref{fig:subfigure2} and \ref{fig:subfigure3}. Therefore, it is generally possible to find a favorable $(d,T)$ to synthesize the LLQR controller such that the closed loop transient response does not degrade much -- this is akin to the insight in \cite{2005_Bamieh_spatially_invariant} used to localize the implementation of funnel-causal systems. 

In summary, our result demonstrates that the LLQR optimal controller can be synthesized and implemented in a localized manner, but can achieve transient response close to that of an unconstrained optimal controller.

\begin{table}[h]
 \caption{Comparison between Different Control Schemes}
 \label{Table:1}
\begin{center}
    \begin{tabular}{ | l | l | l | l | l |}
    \hline
     & Ideal $\mathcal{H}_2$ & Delayed & Distributed& LLQR \\ \hline
    Comm Speed & Inf & 1.5 & 1.5 & 1.5 \\ \hline
    Control Time & Inf & Inf & Inf & 29 \\ \hline
    Locality & Max(58) & Max(58) & Max(58) & 9 \\ \hline
    Objective & 1 & 126.7882 & 1.1061 & 1.1142 \\ \hline
    \end{tabular}
\end{center}
\end{table}

\section{Conclusion}\label{sec:conclusion}
In this paper, we introduced the notion of state feedback localizability through a feasibility test consisting of a set of linear equations. From the solution of the local feasibility tests, we synthesized a receding horizon like controller achieving a localized closed loop. We showed that this implementation is localized and robust with respect to computation error, and is particularly well suited to unstable plants. The Localized LQR optimal control problem was formulated, and its analytic solution was derived. Through numerical simulation, we further showed that the LLQR optimal controller achieves similar transient performance to that of an unconstrained $\mathcal{H}_2$ optimal controller.

In the future, we will look to extend these results to output feedback, and to more rigorously analyze the robustness of our scheme.





\section*{APPENDIX}
The following lemma is useful to prove Theorem 1. We assume that $(\mathcal{S}_x, \mathcal{S}_u)$ is a $(d,T)$ localized FIR constraint for $(A,B)$.
\begin{lemma}
Suppose that $\Sp{x_{(j,d)}} \subseteq (\mathcal{S}_{x(j,d)})_{w(j,d)}$ and $\Sp{u_{(j,d)}} \subseteq (\mathcal{S}_{u(j,d)})_{w(j,d)}$. Then 
\begin{equation}
E_x(A_{(j,d)} x_{(j,d)}[k]) = A E_x(x_{(j,d)}[k])\label{eq:lem1}
\end{equation}
\begin{equation}E_x(B_{(j,d)} u_{(j,d)}[k]) = B E_u(u_{(j,d)}[k]).\label{eq:lem2}
\end{equation}
\label{lemma:local_feasibility}
\end{lemma}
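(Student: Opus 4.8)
The plan is to verify both identities entrywise, reducing everything to two combinatorial facts about the index sets $\mathcal{F}_{(j,d)}$ and $\mathcal{F}_{(j,d+1)}$: first, that under the stated hypotheses the reduced state $x_{(j,d)}[k]$ is supported (after the coordinate relabelling implicit in $E_x$) on $\mathcal{F}_{(j,d)}$; and second, that one application of $A$ (resp.\ $B$) can enlarge this support to at most $\mathcal{F}_{(j,d+1)}$, which is exactly the index set on which $A_{(j,d)}$, $B_{(j,d)}$, and the zero-padding $E_x$ are defined. Once these are in hand, both sides of \eqref{eq:lem1} and \eqref{eq:lem2} are vectors supported on $\mathcal{F}_{(j,d+1)}$, so it remains only to check that their entries on $\mathcal{F}_{(j,d+1)}$ coincide, which is immediate from the definitions of the reduced plant matrices.

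For \eqref{eq:lem1}, I would first note that since $\mathcal{S}_x$ is $(A,d)$ sparse we have $\mathcal{S}_x[k]\subseteq\bigcup_{i=0}^{d}\Sp{A}^i$, so the $w(j,d)$-th column of $\mathcal{S}_{x(j,d)}[k]$ is supported on $\mathcal{F}_{(j,d)}$; together with $\Sp{x_{(j,d)}[k]}\subseteq(\mathcal{S}_{x(j,d)}[k])_{w(j,d)}$ this gives $\Sp{E_x(x_{(j,d)}[k])}\subseteq\mathcal{F}_{(j,d)}$. Next, for any $\ell\notin\mathcal{F}_{(j,d+1)}$ and any $m\in\mathcal{F}_{(j,d)}$ one must have $A_{\ell m}=0$, since $A_{\ell m}\neq0$ would force $\dist{j}{\ell}{\Sp{A}}\le\dist{j}{m}{\Sp{A}}+1\le d+1$; hence $A\,E_x(x_{(j,d)}[k])$ vanishes off $\mathcal{F}_{(j,d+1)}$, as does $E_x(A_{(j,d)}x_{(j,d)}[k])$ by definition of $E_x$. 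For $\ell\in\mathcal{F}_{(j,d+1)}$ the $\ell$-th entry of $A\,E_x(x_{(j,d)}[k])$ equals $\sum_{m\in\mathcal{F}_{(j,d)}}A_{\ell m}(x_{(j,d)}[k])_m$, which is precisely the $\ell$-th entry of $A_{(j,d)}x_{(j,d)}[k]$ because $A_{(j,d)}$ is the restriction of $A$ to rows and columns in $\mathcal{F}_{(j,d+1)}\supseteq\mathcal{F}_{(j,d)}$ and $x_{(j,d)}[k]$ is supported on $\mathcal{F}_{(j,d)}$. Applying $E_x$ yields \eqref{eq:lem1}.

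For \eqref{eq:lem2} the argument is even shorter and uses only the definition of the reduced control vector: every $u_i$ retained in $u_{(j,d)}$ satisfies $\{\ell\mid B_{\ell i}\neq0\}\subseteq\mathcal{F}_{(j,d+1)}$, so $B\,E_u(u_{(j,d)}[k])$ is supported on $\mathcal{F}_{(j,d+1)}$, and on that set its $\ell$-th entry is $\sum_{i}B_{\ell i}(u_{(j,d)}[k])_i$, i.e.\ the $\ell$-th entry of $B_{(j,d)}u_{(j,d)}[k]$, since $B_{(j,d)}$ restricts $B$ to rows in $\mathcal{F}_{(j,d+1)}$ and columns in $u_{(j,d)}$; zero-padding by $E_x$ finishes it. I expect the only real obstacle to be notational rather than mathematical: one has to keep the reduced-coordinate relabelling consistent across $E_x$, $E_u$, $A_{(j,d)}$ and $B_{(j,d)}$, and carefully distinguish ``this entry vanishes because its index lies outside $\mathcal{F}_{(j,d+1)}$'' from ``this entry vanishes because the corresponding $A$ or $B$ coefficient is zero'' — no analytic or structural difficulty arises.
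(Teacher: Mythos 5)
Your proof is correct and rests on the same key fact as the paper's: if $m\in\mathcal{F}_{(j,d)}$ and $A_{\ell m}\neq 0$ then $\ell\in\mathcal{F}_{(j,d+1)}$, so $A$ acting on a vector supported in $\mathcal{F}_{(j,d)}$ only ``sees'' the rows and columns retained in $A_{(j,d)}$. The paper phrases this as iteratively zeroing out the rows and columns of $A$ indexed outside $\mathcal{F}_{(j,d+1)}$ without changing the product, whereas you verify the identity entrywise, but the argument is essentially the same.
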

\begin{proof}
We only prove equality \eqref{eq:lem1}, as \eqref{eq:lem2} follows from a nearly identical argument.
$\Sp{x_{(j,d)}} \subseteq (\mathcal{S}_{x(j,d)})_{w(j,d)}$ implies that the support of $E_x(x_{(j,d)}[k])$ is contained within $\mathcal{F}_{(j,d)}$. Let $x_i$ be a state such that $i  \not\in \mathcal{F}_{(j,d+1)}$ and $x_l$ a state such that $\dist{l}{i}{\A} \leq 1$: 
then $l  \not\in \mathcal{F}_{(j,d)}$. Let $\Delta_{A1}$ be a matrix of the same dimension as $A$, but only have one (possibly) non-zero entry $-A_{il}$ at $(i,l)$-th location. Clearly, $\Delta_{A1} E_x(x_{(j,d)}[k]) = 0$. Therefore, setting the $(i,l)$-entry of $A$ to zero does not change the value of the RHS of \eqref{eq:lem1}. 

Similarly, let $x_s$ be a state such that  $\dist{i}{s}{\A} \leq 1$. Letting $\Delta_{A2}$ be a matrix of the same dimension as $A$, but with only one (possibly) non-zero entry $-A_{si}$ at $(s,i)$-th location, we then also have $\Delta_{A2} E_x(x_{(j,d)}[k]) = 0$ for all $k$.  Thus setting the $(s,i)$-entry of $A$ to zero does not change the value of the RHS of \eqref{eq:lem1}. 

Repeatedly applying this argument, we can explicitly set all the elements in the $i$-th row/column of $A$ to zero, without changing the value of the RHS of \eqref{eq:lem1} -- clearly this implies that the desired equality indeed holds. 
\end{proof}

Now, we can prove Theorem \ref{thm:local_feasibility}.

\begin{proof}[Theorem \ref{thm:local_feasibility}]
Assume that $(x_{(j,d)},u_{(j,d)})$ is a feasible solution for \eqref{eq:system_1}. Applying the $E_x$ operator to both sides of \eqref{eq:sim_dyn} and using Lemma \ref{lemma:local_feasibility}, it is straightforward to verify that $(E_x(x_{(j,d)}),E_u(u_{(j,d)}))$ satisfy \eqref{eq:dynamics}. In addition, $E_x(x_{(j,d)}) \in (\mathcal{S}_x)_j$ and $E_u(u_{(j,d)}) \in (\mathcal{S}_u)_j$, so the global feasibility test is feasible.

To show the opposite direction, assume that $(R,M)$ is a solution to the global feasibility test. It suffices to show that $(R_j,M_j)$ satisfy the same sparsity constraints as $(E_x(x_{(j,d)}),E_u(u_{(j,d)}))$. This directly follows from the fact that $(\mathcal{S}_x, \mathcal{S}_u)$ is a $(d,T)$ localized FIR constraint for $(A,B)$.
\end{proof}

%
%
%



\bibliographystyle{IEEEtran}
\bibliography{Distributed}


\end{document}